\documentclass[letterpaper, 10 pt, journal, twoside]{IEEEtran}

\IEEEoverridecommandlockouts                              



\usepackage{amsmath} 
\usepackage{amssymb}  
\usepackage{amsthm}

\title{
Exploiting symmetry for discrete-time reachability computations
}

\author{John Maidens and Murat Arcak
\thanks{This work was supported in part by the National Science Foundation under grant ECCS-1405413.}
\thanks{J. Maidens and M. Arcak are with the Department of Electrical Engineering \& Computer Sciences, University of California, Berkeley, 253 Cory Hall,
Berkeley, CA, 94720  USA,  e-mail: {\tt\small \{maidens, arcak\}@eecs.berkeley.edu}}%
}
\usepackage{comment}
\usepackage{amsfonts}
\usepackage{amsmath}
\usepackage{graphicx} 
\usepackage{algpseudocode}
\usepackage{algorithmicx}
\usepackage{algorithm}
\usepackage{url}       
\usepackage{caption}
\usepackage{subcaption}
\usepackage{fullpage}

                               \newcommand{\G}{\mathcal{G}}

                               \newcommand{\U}{\mathcal{U}}

                               \newcommand{\R}{\mathbb{R}}
                               
                               \newcommand{\W}{\mathcal{W}}
                               \newcommand{\X}{\mathcal{X}}

\newtheorem{prop}{Proposition}

\newtheorem{defn}{Definition}

\pagenumbering{gobble}

\begin{document}

\maketitle

\begin{abstract}
We present a method of computing backward reachable sets for nonlinear discrete-time control systems possessing continuous symmetries. The starting point is a dynamic game formulation of reachability analysis where control inputs aim to maintain the state variables within a target tube despite disturbances. Our method exploits symmetry to compute the reachable sets in a lower-dimensional space, enabling a significant computational speedup. To achieve this, we present a general method for symmetry reduction based on the Cartan frame, which simplifies the dynamic programming iteration without algebraic manipulation of the state update equations. We illustrate the results by computing a backward reachable set for a six-dimensional reach-avoid game of two Dubins vehicles. 
\end{abstract}

\begin{IEEEkeywords}
Game theory; Computational methods; Algebraic/geometric methods
\end{IEEEkeywords}

\section{Introduction}

\IEEEPARstart{T}{he} computation of reachable sets has long played an important role in control theory \cite{Lee67, Leitmann82, Kurzhanskiy08}. 
Reachable sets appear in model-based safety verification of dynamic systems \cite{Lygeros99, Tomlin03} where reachability analysis proceeds by either demonstrating that any trajectory of a system model remains within a set of states labeled safe, or providing an example of a state trajectory that leaves the set of safe states. In particular, the computation of backward reachable sets is used to determine the set of states that can be restricted to the safe region via an appropriate control input \cite{Mitchell07}. The computation of reachable sets is also important for computing finite-state abstractions of continuous-state systems, which allow techniques from formal methods and model checking to be applied for automated verification and control synthesis \cite{Tabuada09}.  

One of the major challenges of reachability analysis is the computational cost of solving a dynamic programming recursion on a state space grid. Since the number of grid points increases exponentially with the state dimension, it becomes intractable to compute reachable sets for high-dimensional systems. A number of methods have been developed to address this challenge including projection-based methods \cite{Mitchell03}, 
convex relaxations based on occupation measures \cite{Lasserre08, Shia14}, methods exploiting monotone systems properties \cite{Coogan15}, simulation-based methods \cite{Donze07, Julius09, Huang12, Maidens15}  and methods based on support functions \cite{LeGuernic10, Frehse11}. 

The paper \cite{Chen16} is similar in spirit to the present paper in that it attempts to address the curse of dimensionality by describing reachable sets of high dimensional systems in terms of the reachable sets of lower dimensional systems. However it differs from the present work because it focuses on decoupled systems rather than symmetric systems and it is formulated for continuous-time systems which are not considered in this paper. 

Systems that possess symmetries are amenable to model order reduction techniques that simplify their analysis. Such techniques have been successfully applied in many aspects of control engineering including controllability for multi-agent systems \cite{Rahmani09}, stability analysis of networked systems \cite{Rufino17}, 
and control of mechanical systems \cite{Bloch96, Bullo99}. 

In this paper, we demonstrate that symmetries of a control system can be exploited to reduce the dimension of the state space for backward reachability computations. We build on results presented in \cite{Maidens17a, Maidens17b} where we have shown how optimal control policies can be efficiently computed via symmetry reduction. We will show that by exploiting symmetry to reduce the state space, we can speed up backward reachability computations by several orders of magnitude. The main results are the proof of two propositions that 1) establish that symmetries of the system dynamics and target tube imply symmetries of the corresponding effective target sets, and 2) provide a dynamic programming algorithm for computing the backward reachable sets over a reduced state space. 

Related results have appeared previously in other contexts including symmetry reduction for optimal control of nonlinear systems \cite{Grizzle84, Ohsawa13} and Markov decision processes \cite{Zinkevich01, Narayanamurthy07}. Dimensionality reduction using symmetry has also been applied in \cite{Mitchell05} to compute a lower-dimensional model for a two aircraft collision avoidance problem similar to the problem we present in Section \ref{sec:Dubins}. In contrast with previous work in reachability, here we present a general method for computing such symmetry reductions based on Cartan's method of moving frames, which allows us to compute a set of invariants of the dynamics using only information about the symmetries they possess. Thus, our method does not require the explicit algebraic computation of a lower-dimensional system model, which could facilitate its application in circumstances where this is difficult. 

Our method only requires the ability to evaluate the state update equations and to verify that they satisfy the symmetry properties given in Definition \ref{def:system_symmetries} but requires no explicit reduced model. This increases the ease with which the method can be applied to new systems and introduces the possibility of developing software packages generally applicable to computing reachable sets for systems with symmetries. 

We begin by presenting the main results on symmetry reduction for backward reachable set computation in Section \ref{sec:main_results}. We then apply these results to a dynamic pursuit-evasion game of two Dubins vehicles in Section \ref{sec:Dubins}. Conclusions and future directions for research are presented in Section \ref{sec:conclusion}. Software to reproduce the computational results presented in this paper is available at \url{https://github.com/maidens/2017-LCSS}.

\section{Symmetry reduction for discrete-time backward reachability}
\label{sec:main_results}

We begin by formulating a discrete-time backward reachability problem following the notation described in \cite{Bertsekas05}. We consider the system
\[
	x_{k+1} = f(x_k, u_k, w_k)
\]
where $x_k \in \X$ denotes the system's state at time step $k$, $u_k$ represents a control input to the system and $w_k$ represents a disturbance. We assume the control input $u_k$ is allowed to take values in a set $\U$ and the disturbance takes values in a set $\W$. The state transition map $f: \X \times \U \times \W \to \X$ defines the system's dynamics. 

We wish to choose a control policy $\pi = \{ \mu_0, \dots, \mu_{n-1} \}$ with $\mu_k: \X \to \U$ such that for each $k$ the state $x_k$ remains in a given set $X_k$, called the target set at time $k$, for any admissible sequence of disturbances $w_k$. Together the sets $\{X_k : k = 0, \dots, N\}$ form a target tube. The goal is to compute a sequence of effective target  sets $\bar X_k$ such that for any state $x_k \in \bar X_k$ there is a control policy $\pi$ such that $x_t \in X_t$ for all $t = k, \dots, N$. Note that if $X_k = \X$ for all $k \ne N$, this is equivalent to finding a backward reachable set \cite{Mitchell07} from the terminal set $X_N$. If $X_k = X_N$ for all $k$, this is equivalent to finding the viability kernel (in the case with no disturbance) \cite{SaintPierre94} or discriminating kernel \cite{Cardaliaguet94} of $X_N$. In the case where there is no control input, this is equivalent to finding the states for which there is no disturbance in $\W^N$ that can lead the state out of the target tube. 

By introducing stage costs 
\[
g_k(x_k) = \left\{ \begin{array}{ll} 0 & \text{ if } x_k \in X_k \\ 
1 & \text{ if } x_k \not\in X_k \end{array}\right. 
\]
we rewrite the problem of keeping the state within the target tube as a minimax optimal control problem 
\[
\begin{split}
 \operatorname*{\textbf{minimize }}_{ \pi }  & \ \ \ \max_{w\in \W^N}  J(\pi, w, x_0) \\
\textbf{subject to } & \ \ \ J(\pi, w, x_0) = \sum_{k = 0}^N g_k(x_k) \\
& \ \ \ x_{k+1} = f(x_k, \mu_k(x_k), w_k)
\end{split}
\]
where the minimization occurs over all admissible policies $\pi$. For a particular value of $x_0$, if the optimal value of the problem is 0 then $x_0 \in \bar X_0$ and if the optimal value is 1 then $x_0 \not \in \bar X_0$. 
A solution to this problem can then be found by computing a sequence of value functions $J_k : \X \to \R$ via the dynamic programming recursion 
\[
\begin{split}
J_N(x_N) &= g_N(x_N) \\ 
J_k(x_k) &= \min_{u_k \in \U} \max_{w_k \in \W} [g_k(x_k) + J_{k+1}(f(x_k, u_k, w_k))] 
\end{split}
\]
The effective target sets can then be described according to $\bar X_k = \{x_k : J_k(x_k) = 0 \}$.

\subsection{Reachable sets in symmetric systems} 

We now formalize what it means for a backward reachability problem to be symmetric via a transformation group acting on the system's states, inputs and disturbances. 

\begin{defn}
A transformation group on $\X \times \U \times \W$ is a set of tuples $h_\alpha = (\phi_\alpha, \chi_\alpha, \psi_\alpha)$ parametrized by elements of a group $\G$, such that $\phi_\alpha: \X \to \X$, $\chi_\alpha: \U \to \U$ and, $\psi_\alpha: \W \to \W$ are all bijections satisfying
\begin{itemize}
\item $\phi_e(x) = x$, $\chi_e(u) = u$, $\psi_e(w) = w$ when $e$ is the identity of the group $\mathcal{G}$ and
\item $\phi_{a * b}(x) = \phi_a \circ \phi_b (x)$, $\chi_{a * b}(u) = \chi_a \circ \chi_b (u)$, $\psi_{a * b}(x) = \psi_a \circ \psi_b (x)$ for all $a, b \in \mathcal{G}$ where $*$ denotes the group operation and $\circ$ denotes function composition. 
\end{itemize} 
\end{defn} 

\begin{defn}
We say that the backward reachability problem defined by $(f, X)$, where $X = (X_0, \dots, X_N)$, is invariant under the transformation group $h$ if for all $\alpha \in \G$ 
\begin{itemize}
\item $\phi_\alpha(x_k) \in X_k$ for all $x_k \in X_k$,  $k = 0, \dots, N$ and
\item $\phi^{-1}_\alpha \circ f_k( \phi_\alpha(x_k), \chi_\alpha(u_k), \psi_\alpha(w_k))$ \newline $= f_k(x_k, u_k, w_k), 
   \quad k = 0, \dots, N-1$. 
\end{itemize}
\label{def:system_symmetries}
\end{defn}

For reachability problems possessing symmetries, the symmetry extends to the effective target sets in a natural manner, as demonstrated by the following result. 

\begin{prop}
Let the reachability problem defined by $(f, X)$ be invariant under $h$. If $x_k \in \bar X_k$ then for all $\alpha \in \G$, we have $\phi_\alpha(x_k) \in \bar X_k$. That is, the reachable set is symmetric. 
\label{prop:symmetric_reachable_set}
\end{prop}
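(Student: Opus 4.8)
The plan is to prove the stronger statement that the value functions produced by the dynamic programming recursion are themselves invariant under the group action, i.e.\ $J_k(\phi_\alpha(x)) = J_k(x)$ for every $x \in \X$, every $\alpha \in \G$ and every $k = 0, \dots, N$. Once this is established, the claim about $\bar X_k = \{x : J_k(x) = 0\}$ follows immediately: if $x_k \in \bar X_k$ then $J_k(\phi_\alpha(x_k)) = J_k(x_k) = 0$, hence $\phi_\alpha(x_k) \in \bar X_k$.

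A preliminary observation is that the first invariance condition in Definition \ref{def:system_symmetries} in fact gives $\phi_\alpha(X_k) = X_k$, not merely $\phi_\alpha(X_k) \subseteq X_k$: applying that condition to $\alpha^{-1}$ yields $\phi_{\alpha^{-1}}(X_k) \subseteq X_k$, and since $\phi_{\alpha^{-1}} = \phi_\alpha^{-1}$ by the group axioms, composing with $\phi_\alpha$ gives the reverse inclusion. Consequently the stage costs are invariant, $g_k(\phi_\alpha(x)) = g_k(x)$ for all $x$, $\alpha$, $k$, because $\phi_\alpha(x) \in X_k$ if and only if $x \in X_k$. I also note that the dynamics condition of Definition \ref{def:system_symmetries}, stated as $\phi_\alpha^{-1} \circ f(\phi_\alpha(x), \chi_\alpha(u), \psi_\alpha(w)) = f(x,u,w)$, is equivalent (by composing both sides with $\phi_\alpha$) to the equivariance form $f(\phi_\alpha(x), \chi_\alpha(u), \psi_\alpha(w)) = \phi_\alpha\!\left(f(x,u,w)\right)$, which is the form I will use.

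The invariance of $J_k$ is then proved by backward induction on $k$. The base case $k = N$ is immediate from $J_N = g_N$ together with the stage-cost invariance noted above. For the inductive step, assume $J_{k+1} \circ \phi_\alpha = J_{k+1}$, and start from the recursion evaluated at $\phi_\alpha(x_k)$. In the inner optimization, reparametrize by writing $u_k = \chi_\alpha(u')$ and $w_k = \psi_\alpha(w')$; since $\chi_\alpha$ and $\psi_\alpha$ are bijections of $\U$ and $\W$, the minimization over $u_k \in \U$ and the maximization over $w_k \in \W$ are unchanged. Then apply $g_k(\phi_\alpha(x_k)) = g_k(x_k)$ and the equivariance identity to rewrite the argument of $J_{k+1}$ as $\phi_\alpha\!\left(f(x_k, u', w')\right)$, and finally invoke the inductive hypothesis to collapse $J_{k+1}\!\left(\phi_\alpha(f(x_k, u', w'))\right)$ to $J_{k+1}\!\left(f(x_k, u', w')\right)$. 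The resulting expression is exactly the recursion defining $J_k(x_k)$, completing the induction.

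I expect the delicate points to be bookkeeping rather than conceptual: ensuring the change of variables inside the nested $\min$--$\max$ is legitimate (this is precisely why bijectivity of $\chi_\alpha$ and $\psi_\alpha$ is required, and why the \emph{same} group element $\alpha$ must be used for the state, input and disturbance transformations simultaneously), and being careful that the set invariance is used in its two-sided form $\phi_\alpha(X_k) = X_k$ rather than just one inclusion. No compactness or continuity hypotheses are needed, since $g_k$ is $\{0,1\}$-valued and the recursion has finitely many stages.
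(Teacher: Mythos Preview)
Your proposal is correct and follows essentially the same route as the paper: backward induction on $k$ to show $J_k \circ \phi_\alpha = J_k$, using invariance of the stage costs, equivariance of $f$, and a bijective change of variables $(u,w) \mapsto (\chi_\alpha(u), \psi_\alpha(w))$ inside the $\min$--$\max$. Your write-up is in fact slightly more careful than the paper's, since you explicitly justify the two-sided equality $\phi_\alpha(X_k) = X_k$ needed for stage-cost invariance and spell out why the reparametrization preserves the optimization domains.
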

\begin{proof}
We prove this claim by induction on $k$. Note that the claim holds for $k=N$ by assumption since $\bar X_N = X_N$. Now, suppose that it holds for $k+1$. Thus for any $\alpha \in \G$ we have $J_{k+1} \circ \phi_\alpha = J_{k+1}$. Let $\alpha \in \G$ and $x_k \in \bar X_k$. We have 
\[
\begin{split}
	J_k(x_k) &= \min_{u_k \in \U} \max_{w_k \in \W} [g_k(x_k) + J_{k+1}(f(x_k, u_k, w_k))] \\
			 &= \min_{u_k \in \U} \max_{w_k \in \W} [g_k(\phi_\alpha(x_k))  \\ 
			 & \quad + J_{k+1}\circ \phi^{-1}_\alpha (f(\phi_\alpha(x_k), \chi_\alpha(u_k), \psi_\alpha(w_k)))] \\
			 &= \min_{\tilde u_k \in \U} \max_{\tilde w_k \in \W} [g_k(\phi_\alpha(x_k)) \\
			 & \quad + J_{k+1} \circ \phi_{\alpha^{-1}} (f(\phi_\alpha(x_k), \tilde u_k, \tilde w_k))] \\
			 &= \min_{\tilde u_k \in \U} \max_{\tilde w_k \in \W} [g_k(\phi_\alpha(x_k)) \\
			 & \quad + J_{k+1} (f(\phi_\alpha(x_k), \tilde u_k, \tilde w_k))] \\
			 &= J_k(\phi_\alpha(x_k)). 
\end{split}
\]
Thus $\phi_\alpha(x_k) \in \bar X_k$.
\end{proof}

\subsection{Cartan's moving frame method} 
\label{sec:Cartan}
The symmetry property of reachable sets derived in Proposition \ref{prop:symmetric_reachable_set} can be exploited to improve the efficiency of backward reachable set algorithms. To do so for continuous transformation groups (\textit{i.e.} when $\G$ is a Lie group), we rely on a formalism based on the moving frame method of Cartan \cite{Cartan37}, which we briefly introduce in this section following the notation of \cite{Bonnabel08}. 

We assume that $\G$ is an $r$-dimensional Lie group (with $r \le n$) acting on $\mathcal{X}$ via the diffeomorphisms $(\phi_\alpha)_{\alpha \in \mathcal{G}}$. We then split $\phi_\alpha$ as $(\phi_\alpha^a, \phi_\alpha^b)$ with $r$ and $n-r$ components respectively so that $\phi_\alpha^a$ is invertible.  For some $c$ in the range of $\phi^a$, we can then define a coordinate cross section to the orbits $\mathcal{C} = \{x: \phi_e^a(x) = c\}$ where $e$ is the group identity. This cross section is an $n-r$-dimensional submanifold of $\mathcal{X}$. Assume that for any point  $x \in \mathcal{X}$, there is a unique group element $\alpha \in \mathcal{G}$ such that $\phi_\alpha(x) \in \mathcal{C}$. If we denote $\alpha$ as $\gamma(x)$, then the map $\gamma: \mathcal{X} \to \mathcal{G}$ is called a moving frame for the symmetric system.

The moving frame can be found by solving the normalization equations:
\[
\phi_{\gamma(x)}^a(x) = c. 
\]
We then define a map $\rho: \mathcal{X} \to \mathbb{R}_{n-r}$ as 
\[
\rho(x) = \phi^b_{\gamma(x)}(x). 
\]
Note that, for all $\alpha \in \mathcal G$ we have $\rho(\phi_\alpha(x))=\rho(x)$ (see Section II.C.1 of \cite{Bonnabel08} for a proof). Thus, $\rho$ is  invariant to the action of $\G$ on the state space. Further, the restriction $\bar \rho$ of $\rho$ to $\mathcal{C}$ is injective, and thus has a well-defined inverse on its range. Thus we can use $\rho$ to define invariant coordinates on $\X$. 

In general, the theory of moving frames only guarantees that these invariants exist locally. However, for many problems of practical interest, including the example we will present in Section \ref{sec:moving_frame_example}, the local invariants can be extended globally. Thus we will present our results assuming a global set of invariants $\rho$ to simplify the notation.

\subsection{Illustrative example: Moving frame for a two-vehicle control problem}
\label{sec:moving_frame_example}

%
 Consider a six-dimensional state space describing a two vehicle system illustrated in Figure \ref{fig:vehicles_competitive}, with states modelled by the variables
\[
\begin{split}
x_k = \begin{bmatrix} z_k & y_k & \theta_k & \tilde z_k & \tilde y_k & \tilde \theta_k \end{bmatrix}^T. 
\end{split} 
\]
Define the rotation matrix
\[
R_\varphi = \begin{bmatrix} 
\cos \varphi & -\sin \varphi & 0 & 0 & 0 & 0 \\
\sin \varphi & \cos \varphi & 0 & 0 & 0 & 0 \\
0 & 0 & 1 & 0 & 0 & 0 \\
0 & 0 & 0 & \cos \varphi & -\sin \varphi & 0 \\
0 & 0 & 0 & \sin \varphi & \cos \varphi & 0 \\
0 & 0 & 0 & 0 & 0 & 1 
\end{bmatrix}. 
\]
If we define coordinates on the symmetry group $\mathcal{G} = SE(2)$ in terms of a rotation angle $\theta'$ and a translation by $(z', y')$ then this system's dynamics are invariant under the transformation group $h_\alpha = (\phi_\alpha, \chi_\alpha, \psi_\alpha)$ where 
\[
\begin{split}
\phi_{(z', y', \theta')}(x) &= R_{\theta'} x + \begin{bmatrix} z', & y', & \theta', &  z', &  y', & \theta' \end{bmatrix}^T \\
\chi_{(z', y', \theta')}(u) &= u \\
\psi_{(z', y', \theta')}(w) &= w. 
\end{split} 
\]

For this system, the moving frame $\gamma$ can be computed by solving the normalization equations
\[
0 = \phi^a_{\gamma} (x) = \begin{bmatrix} \cos \gamma_3 & -\sin \gamma_3 & 0 \\ \sin \gamma_3 & \cos \gamma_3 & 0 \\ 0 & 0 & 1 \end{bmatrix} 
\begin{bmatrix} x_1 \\ x_2 \\ x_3 \end{bmatrix} + \begin{bmatrix} \gamma_1 \\ \gamma_2 \\ \gamma_3 \end{bmatrix} 
\]
which give
\[
\begin{split}
\gamma(x) 
&= - \begin{bmatrix} \cos(-x_3) & -\sin(-x_3) & 0 \\ \sin(-x_3) & \cos(-x_3) & 0 \\ 0 & 0 & 1 \end{bmatrix}
 \begin{bmatrix} x_1 \\ x_2 \\ x_3 \end{bmatrix} \\
&= \begin{bmatrix} -x_1 \cos x_3 - x_2 \sin x_3 \\
                    x_1 \sin x_3 - x_2 \cos x_3 \\
                    -x_3 \end{bmatrix}.  
\end{split} 
\]
Three invariants can then be computed as
\[
\resizebox{\columnwidth}{!}{$
\begin{aligned}
\rho(x) &= \phi^b_{\gamma(x)} (x) \\
&= \begin{bmatrix} \cos(\gamma_3(x)) & -\sin(\gamma_3(x)) & 0 \\ \sin(\gamma_3(x)) & \cos(\gamma_3(x)) & 0 \\ 0 & 0 & 1 \end{bmatrix}
 \begin{bmatrix} x_4 \\ x_5 \\ x_6 \end{bmatrix} + \begin{bmatrix} \gamma_1(x) \\ \gamma_2(x) \\ \gamma_3(x) \end{bmatrix} \\
&= \begin{bmatrix} 
(x_4 - x_1) \cos x_3 + (x_5 - x_2) \sin x_3 \\
-(x_4 - x_1) \sin x_3 + (x_5 - x_2) \cos x_3 \\
x_6 - x_3 
\end{bmatrix}.  
\end{aligned}
$}
\]
Restricted to the cross-section \newline $\{x : 0 = \phi_e^a(x) = \begin{bmatrix}x_1 & x_2 & x_3 \end{bmatrix}^T \}$, $\rho$ is injective, with inverse given by 
\[
\bar \rho^{-1} (\bar x) = \begin{bmatrix} 0, & 0, & 0, & \bar x_1, & \bar x_2, & \bar x_3 \end{bmatrix}^T.  
\]

The reduced state space for this two vehicle system can be interpreted as describing the relative orientation of the two vehicles, by introducing a moving coordinate system that fixes one vehicle at the origin. Note that the reduction functions $\rho$ and $\bar \rho^{-1}$ are computed based only on the symmetries of the system, and do not depend on the system dynamics. This is in contrast with more \textit{ad hoc} approaches where the dynamics on a lower dimensional space are computed algebraically. 

\subsection{An efficient algorithm for backward reachable set computation in symmetric systems} 

The invariance properties of reachable sets established in Proposition  \ref{prop:symmetric_reachable_set} suggest that this property might be exploited to reduce the dimension of the space in which dynamic programming is performed to compute backward reachable sets.  This can be done by defining a reduced value function $\bar J_k(\bar x_k) = J_k( \bar \rho^{-1}(\bar x_k))$ taking values on a space of dimension $n-r$. The following result establishes that $\bar J_k$ can be computed via a dynamic programming iteration based on the invariants $\rho$ of the Cartan frame introduced in the previous section. 

\begin{prop}
\label{prop:reduced_dynamic_programming}
The reachable set can be computed via a dynamic programming iteration in reduced coordinates:
\[
\begin{split}
\bar J_N(\bar x_N) &= g_N(\bar \rho^{-1}(\bar x_N)) \\ 
\bar J_k(\bar x_k) &= \min_{u_k \in \U} \max_{w_k \in \W} g_k(\bar \rho^{-1}( \bar x_k)) \\
 & \quad + \bar J_{k+1}(\rho(f( \bar \rho^{-1}(\bar x_k), u_k, w_k))). 
\end{split}
\]
\end{prop}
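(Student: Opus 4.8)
The plan is to take $\bar J_k := J_k \circ \bar\rho^{-1}$ as the definition of the reduced value function (as in the discussion preceding the statement) and to verify, by induction on $k$ running downward from $k=N$, that this function satisfies the reduced recursion in the proposition. The base case is immediate: since $\bar X_N = X_N$ we have $J_N = g_N$, so $\bar J_N(\bar x_N) = J_N(\bar\rho^{-1}(\bar x_N)) = g_N(\bar\rho^{-1}(\bar x_N))$, which is the claimed initialization.

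The crux is a single identity relating $\bar\rho^{-1}\circ\rho$ to the group action. For any $y \in \X$, let $\alpha = \gamma(y)$ be the value of the moving frame at $y$, so that $\phi_\alpha(y) \in \mathcal{C}$ by definition of $\gamma$. Using the $\G$-invariance of $\rho$ recalled in Section \ref{sec:Cartan}, namely $\rho(\phi_\alpha(y)) = \rho(y)$, together with the fact that $\bar\rho$ is the restriction of $\rho$ to $\mathcal{C}$, we obtain $\bar\rho(\phi_{\gamma(y)}(y)) = \rho(y)$, hence $\bar\rho^{-1}(\rho(y)) = \phi_{\gamma(y)}(y)$. In words: passing from $y$ to $\bar\rho^{-1}(\rho(y))$ amounts to applying the group element $\gamma(y)$ to $y$. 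Now observe that the proof of Proposition \ref{prop:symmetric_reachable_set} in fact establishes more than its statement: it shows $J_{k+1}\circ\phi_\alpha = J_{k+1}$ for every $\alpha \in \G$. Combining these two facts gives the "collapsing identity"
\[
J_{k+1}\bigl(\bar\rho^{-1}(\rho(y))\bigr) = J_{k+1}\bigl(\phi_{\gamma(y)}(y)\bigr) = J_{k+1}(y) \qquad \text{for all } y \in \X.
\]

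For the inductive step, assume $\bar J_{k+1} = J_{k+1}\circ\bar\rho^{-1}$. Expanding $J_k$ via its dynamic programming recursion and then applying the collapsing identity with $y = f(\bar\rho^{-1}(\bar x_k), u_k, w_k) \in \X$,
\[
\begin{split}
\bar J_k(\bar x_k) &= J_k(\bar\rho^{-1}(\bar x_k)) \\
&= \min_{u_k\in\U}\max_{w_k\in\W}\bigl[g_k(\bar\rho^{-1}(\bar x_k)) + J_{k+1}(f(\bar\rho^{-1}(\bar x_k), u_k, w_k))\bigr] \\
&= \min_{u_k\in\U}\max_{w_k\in\W}\bigl[g_k(\bar\rho^{-1}(\bar x_k)) + \bar J_{k+1}(\rho(f(\bar\rho^{-1}(\bar x_k), u_k, w_k)))\bigr],
\end{split}
\]
which is exactly the reduced recursion claimed. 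Finally, the same collapsing identity (now with $k$ in place of $k+1$) gives $J_k = \bar J_k \circ \rho$ on all of $\X$, so the reduced iteration faithfully encodes the original one and $\{\bar x_k : \bar J_k(\bar x_k)=0\}$ pulls back along $\rho$ to the effective target set $\bar X_k$.

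I expect the main obstacle to be precisely the identity $\bar\rho^{-1}(\rho(y)) = \phi_{\gamma(y)}(y)$: one must check that $\bar\rho^{-1}$ is only defined on the range of $\bar\rho$ and that $\rho(y)$ always lies there (it does, being equal to $\bar\rho(\phi_{\gamma(y)}(y))$), and one must invoke the global existence of $\rho$ and $\gamma$ assumed at the end of Section \ref{sec:Cartan}. Everything else is bookkeeping, once one notes that the argument in the proof of Proposition \ref{prop:symmetric_reachable_set} actually delivers full $\G$-invariance of each $J_k$, not merely of the sets $\bar X_k$.
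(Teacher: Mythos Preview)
Your argument is correct and matches the paper's proof: both hinge on the identity $\bar\rho^{-1}(\rho(y)) = \phi_{\gamma(y)}(y)$ together with the $\G$-invariance of $J_{k+1}$ established in the proof of Proposition~\ref{prop:symmetric_reachable_set}, and then substitute into the full-state dynamic programming recursion. One small remark: your induction framing is unnecessary, since $\bar J_{k+1} = J_{k+1}\circ\bar\rho^{-1}$ is the \emph{definition} of the reduced value function rather than an inductive hypothesis, so the recursion can be verified directly at each $k$ without appealing to the previous step.
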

\noindent\textit{Proof.}
Note that from the invariance property of $J_{k+1}$ we have that for any $k$ and any $x \in \X$, 
\[
\begin{split}
J_{k+1}(x) &= J_{k+1}(\phi_{\gamma(x)}(x)) = J_{k+1}(\bar \rho^{-1} \circ \bar \rho \circ \phi_{\gamma(x)}(x)) \\
		   &= J_{k+1}(\bar \rho^{-1} \circ \rho(x)) = \bar J_{k+1} (\rho(x)). 
\end{split}
\]
Thus we have 
\[
\resizebox{\columnwidth}{!}{$
\begin{aligned}
\bar J_k(\bar x_k) &= J_k( \bar \rho^{-1}(\bar x_k)) \\
                   &= \min_{u_k \in \U} \max_{w_k \in \W} [g_k(\bar \rho^{-1}(\bar x_k)) 
                   + J_{k+1}(f(\bar \rho^{-1}(\bar x_k), u_k, w_k))] \\
                   &= \min_{u_k \in \U} \max_{w_k \in \W} [g_k(\bar \rho^{-1}(\bar x_k)) 
                   + \bar J_{k+1}(\rho(f(\bar \rho^{-1}(\bar x_k), u_k, w_k)))]. \qed
\end{aligned}
$}
\]

Note that in this recursion $\bar x_k$ parametrizes a space of lower dimension than $\mathcal{X}$. Thus the dynamic programming iteration can be performed much more efficiently. Further, the dynamic programming iteration can be performed for general state update maps $f$ without any knowledge of $f$ beyond being able to evaluate it and verify its symmetries. 

Once the reduced costs $\bar J_k$ are computed, effective target sets for the original system are defined implicitly via
\[
\bar X_k = \{x \in \mathcal{X} : \bar J_k(\rho(x)) = J_k(x) = 0 \}
\]
and a safely-preserving control policy on the original state space is defined via
\[
\mu_k = \bar \mu_k \circ \rho
\]
where $\bar \mu_k$ is the policy computed via the reduced dynamic programming iteration.

\section{Application: Reachability problem with two Dubins vehicles} 
\label{sec:Dubins}

In this section, we demonstrate how these results can be applied to a reach-avoid game of two identical Dubins vehicles, as depicted in Figure \ref{fig:vehicles_competitive}. Vehicle 1 wishes to reach a configuration in which it can view vehicle 2 using a forward-facing camera mounted on its hood, whereas vehicle 2 wishes to avoid reaching such a configuration where it has been detected. Thus, vehicle 2 wishes to remain outside the shaded region in Figure \ref{fig:vehicles_competitive} at all times. 

The state of vehicle 1 is modelled by variables $(z, y, \theta)$ representing a two-dimensional position $(z, y) \in \R^2$ along with a heading $\theta \in [0, 2\pi)$ while the state of vehicle 2 is modelled by variables with the same interpretation denoted $(\tilde z, \tilde y, \tilde \theta)$.  
The dynamics of each vehicle are governed by the equations
\[
\begin{split}
z_{k+1} &= z_k + v_k \cos(\theta_k) \\
y_{k+1} &= y_k + v_k \sin(\theta_k) \\
\theta_{k+1} &= \theta_k + \frac{1}{L} v_k \sin s_k
\end{split}
\]
where $v_k$ describes a velocity input, $s_k$ describes a steering angle input and $L$ is a parameter that determines the vehicle's turning radius. 

This problem exhibits symmetries corresponding to the rigid motions in two-dimensional Euclidean space, that is, the symmetry group is the three-dimensional group $\mathcal{G} = SE(2)$.
 Let us denote the full state, input, and disturbance vectors of the system as 
\[
\resizebox{\columnwidth}{!}{$
x_k = \begin{bmatrix} z_k & y_k & \theta_k & \tilde z_k & \tilde y_k & \tilde \theta_k \end{bmatrix}^T,   \quad 
u_k = \begin{bmatrix}  \tilde v_k \\ \tilde s_k \end{bmatrix} 
w_k = \begin{bmatrix} v_k \\ s_k \end{bmatrix}.
$}
\]
Using these coordinates, the objectives of the two vehicles can be formulated via the cost function 
\[
g_k(x, y, \theta, \tilde x, \tilde y, \tilde \theta) 
= \begin{cases}
1 & \text{ if } 
\resizebox{.4\columnwidth}{!}{$
\begin{array}{c}
(\tilde z - z)^2 + (\tilde y - y)^2 \le r^2 \\
\text{ and } \\
\frac{(\tilde z - z)\cos\theta + (\tilde y - y)\sin\theta}{\sqrt{(\tilde z - z)^2 + (\tilde y - y)^2}} \ge \cos\epsilon \end{array} 
$}
\\
0 & \text{ otherwise.}
\end{cases}
\] 
\begin{figure}[ht!]
\centering
\includegraphics[width=0.75\columnwidth]{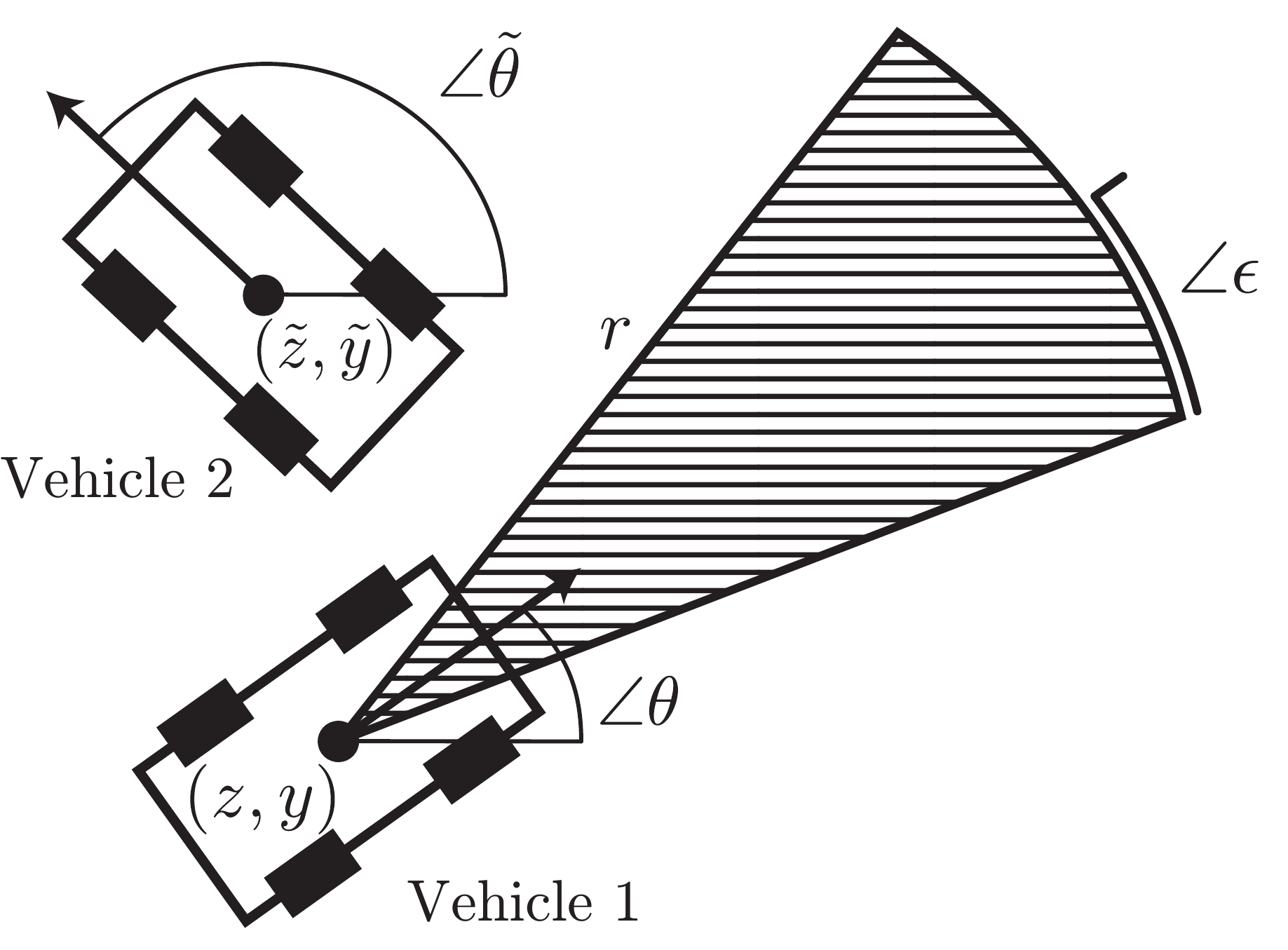}
\caption{Illustration of the two vehicle reachability problem}
\label{fig:vehicles_competitive}
\end{figure}

Using the theory developed in this paper, we demonstrate how the reachable set for a six-dimensional model of this system can be computed by gridding a reduced state space of dimension three.

\subsection{Simulations} 
Using $\rho$ and $\bar \rho^{-1}$ computed in Section \ref{sec:moving_frame_example}, we are now able to compute the backward reachable sets via the recursion given in Proposition \ref{prop:reduced_dynamic_programming}. Inputs are assumed to be constrained to the sets $v_k \in \{0, V_{max}\}$ and $s_k \in \{-S_{max}, 0, S_{max} \}$. We compute the backward reachable set for this system with parameter values  $L = 1$, $V_{max} = 0.05$, $S_{max} = 1$, $\epsilon = 15^\circ$ and $r = 0.5$ over a horizon $N = 10$. The resulting reachable set, computed on a $51 \times 51 \times 51$ grid is shown in Figure \ref{fig:dubins_reachable_set}. Python software to reproduce these plots is available online at \newline \url{https://github.com/maidens/2017-LCSS}. 

\begin{figure}[ht!]
    \centering
    \begin{subfigure}[b]{0.45\columnwidth}
    	\centering
        \includegraphics[width=0.6\textwidth]{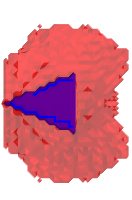}
        \caption{View from above in $(\bar x_1, \bar x_2)$ plane.}
        \label{fig:dubins_reachable_set_a}
    \end{subfigure}
    ~
    \begin{subfigure}[b]{0.45\columnwidth}
        \centering
        \includegraphics[width=0.6\textwidth]{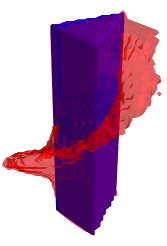}
        \caption{Three-dimensional view of $(\bar x_1, \bar x_2, \bar x_3)$.}
        \label{fig:dubins_reachable_set_b}
    \end{subfigure}
    \caption{Two views of the reachable set computed for the Dubins vehicle problem. The complement of the target set $X_k$ is shown in solid blue. The complement of the effective target set $\bar X_0$ is shown in transparent red.}
     \label{fig:dubins_reachable_set}
\end{figure}

The effective target set in the reduced space defined by $\rho$ can be interpreted in terms of a moving coordinate system where vehicle 1 is frozen at the origin. Thus this general method is able to automatically reproduce intuitive results similar to those resulting from a model derived by hand in \cite{Mitchell05}. Vehicle 2 has a winning strategy for avoiding the blue detection region whenever it begins outside the transparent region plotted in red. But if vehicle 2 begins inside the red region, vehicle 1 has a strategy that can force vehicle 2 into the blue region.

\subsection{Timing experiments}

To illustrate the computational savings that this technique provides, we compare the symmetry reduction approach with a baseline approach that does not exploit symmetry for computing the reachable sets using a varying number of grid points in each state dimension. The dynamic programming recursion is implemented in non-optimized python code run via the standard CPython interpreter on a laptop computer with 2.3 GHz Intel Core i7 processor and 8 GB memory. Wall time to compute the reachable set over a horizon of $N = 1$ is shown in Table \ref{tab:timing}. The results demonstrate that symmetry reduction can accelerate reachability computations by several orders of magnitude, as the exact reachability method used here scales exponentially with the state dimension. In future work we will investigate exploiting symmetry for approximate reachability methods that scale polynomially. 
\begin{table}[ht!]
\begin{center}
  \begin{tabular}{| l | c | c | c |}
    \hline
    Number of grid points in each dimension &  5 & 11 & 51 \\ \hline
    Wall time for reduced model (seconds) & 0.866 & 10.2 & 953 \\ \hline
    Wall time for baseline model (seconds) & 176 & * & * \\
    \hline
  \end{tabular}
\end{center}
\caption{Comparison of running time for symmetry-reduced reachable set computation compared against a non-reduced baseline across for various grid densities. The character $*$ denotes that the computation timed out after 7200 seconds.} 
\label{tab:timing}
\end{table}
\section{Conclusions}
\label{sec:conclusion}
We have presented a general method for reducing the complexity of backward reachable set computations for discrete-time systems with symmetries. This method can be used to substantially accelerate backward reachable set computations, and can be performed without any knowledge of the state update map beyond being able to evaluate it and verify its symmetries.

Interesting future research directions include extending these results to the continuous-time setting through the study of viscosity solutions of Hamilton-Jacobi-Isaacs partial differential equations, combining this technique with approximate reachability approaches to increase its scalability to high-dimensional systems, and developing numerical methods for automatically computing solutions to the normalization equation which would enable $\rho$ and $\bar \rho^{-1}$ to be computed automatically. 

\bibliographystyle{IEEEtran}
\bibliography{references}

\begin{thebibliography}{10}
\providecommand{\url}[1]{#1}
\csname url@samestyle\endcsname
\providecommand{\newblock}{\relax}
\providecommand{\bibinfo}[2]{#2}
\providecommand{\BIBentrySTDinterwordspacing}{\spaceskip=0pt\relax}
\providecommand{\BIBentryALTinterwordstretchfactor}{4}
\providecommand{\BIBentryALTinterwordspacing}{\spaceskip=\fontdimen2\font plus
\BIBentryALTinterwordstretchfactor\fontdimen3\font minus
  \fontdimen4\font\relax}
\providecommand{\BIBforeignlanguage}[2]{{%
\expandafter\ifx\csname l@#1\endcsname\relax
\typeout{** WARNING: IEEEtran.bst: No hyphenation pattern has been}%
\typeout{** loaded for the language `#1'. Using the pattern for}%
\typeout{** the default language instead.}%
\else
\language=\csname l@#1\endcsname
\fi
#2}}
\providecommand{\BIBdecl}{\relax}
\BIBdecl

\bibitem{Lee67}
E.~Lee and L.~Markus, \emph{Foundations of optimal control theory}, ser. SIAM
  series in applied mathematics.\hskip 1em plus 0.5em minus 0.4em\relax Wiley,
  1967.

\bibitem{Leitmann82}
G.~Leitmann, ``Optimality and reachability via feedback controls,''
  \emph{Dynamic systems and microphysics}, pp. 119--141, 1982.

\bibitem{Kurzhanskiy08}
A.~A. Kurzhanskiy and P.~Varaiya, ``Computation of reach sets for dynamical
  systems,'' in \emph{The Control Systems Handbook}, 2nd~ed., W.~S. Levine,
  Ed.\hskip 1em plus 0.5em minus 0.4em\relax CRC Press, Inc., 2009.

\bibitem{Lygeros99}
J.~Lygeros, C.~Tomlin, and S.~Sastry, ``Controllers for reachability
  specifications for hybrid systems,'' \emph{Automatica}, vol.~35, no.~3, pp.
  349--370, 1999.

\bibitem{Tomlin03}
C.~J. Tomlin, I.~Mitchell, A.~M. Bayen, and M.~Oishi, ``Computational
  techniques for the verification of hybrid systems,'' \emph{Proceedings of the
  IEEE}, vol.~91, no.~7, pp. 986--1001, 2003.

\bibitem{Mitchell07}
I.~M. Mitchell, ``Comparing forward and backward reachability as tools for
  safety analysis,'' in \emph{HSCC}, vol. 4416.\hskip 1em plus 0.5em minus
  0.4em\relax Springer, 2007, pp. 428--443.

\bibitem{Tabuada09}
P.~Tabuada, \emph{Verification and control of hybrid systems: a symbolic
  approach}.\hskip 1em plus 0.5em minus 0.4em\relax Springer Science \&
  Business Media, 2009.

\bibitem{Mitchell03}
I.~M. Mitchell and C.~J. Tomlin, ``Overapproximating reachable sets by
  {H}amilton-{J}acobi projections,'' \emph{Journal of Scientific Computing},
  vol.~19, no.~1, pp. 323--346, 2003.

\bibitem{Lasserre08}
J.~B. Lasserre, D.~Henrion, C.~Prieur, and E.~Tr{\'e}lat, ``Nonlinear optimal
  control via occupation measures and {LMI}-relaxations,'' \emph{SIAM J.
  Control Optim.}, vol.~47, no.~4, pp. 1643--1666, 2008.

\bibitem{Shia14}
V.~Shia, R.~Vasudevan, R.~Bajcsy, and R.~Tedrake, ``Convex computation of the
  reachable set for controlled polynomial hybrid systems,'' in \emph{53rd IEEE
  Conference on Decision and Control}, 2014, pp. 1499--1506.

\bibitem{Coogan15}
S.~Coogan and M.~Arcak, ``Efficient finite abstraction of mixed monotone
  systems,'' in \emph{Proceedings of the 18th International Conference on
  Hybrid Systems: Computation and Control}.\hskip 1em plus 0.5em minus
  0.4em\relax ACM, 2015, pp. 58--67.

\bibitem{Donze07}
A.~Donz{\'e} and O.~Maler, ``Systematic simulation using sensitivity
  analysis,'' in \emph{Hybrid Sys.: Comp. Control}.\hskip 1em plus 0.5em minus
  0.4em\relax Springer-Verlag, 2007, pp. 174--189.

\bibitem{Julius09}
A.~A. Julius and G.~J. Pappas, ``Trajectory based verification using local
  finite-time invariance,'' in \emph{Hybrid Sys.: Comp. Control}.\hskip 1em
  plus 0.5em minus 0.4em\relax Springer, 2009, pp. 223--236.

\bibitem{Huang12}
Z.~Huang and S.~Mitra, ``Computing bounded reach sets from sampled simulation
  traces,'' in \emph{Hybrid Sys.: Comp. Control}, 2012, pp. 291--294.

\bibitem{Maidens15}
J.~Maidens and M.~Arcak, ``Reachability analysis of nonlinear systems using
  matrix measures,'' \emph{IEEE Transactions on Automatic Control}, vol.~60,
  no.~1, pp. 265--270, 2015.

\bibitem{LeGuernic10}
C.~Le~Guernic and A.~Girard, ``Reachability analysis of linear systems using
  support functions,'' \emph{Nonlinear Analysis: Hybrid Systems}, vol.~4,
  no.~2, pp. 250--262, 2010.

\bibitem{Frehse11}
G.~Frehse, C.~Le~Guernic, A.~Donz{\'e}, S.~Cotton, R.~Ray, O.~Lebeltel,
  R.~Ripado, A.~Girard, T.~Dang, and O.~Maler, ``{S}pace{E}x: Scalable
  verification of hybrid systems,'' in \emph{Computer Aided
  Verification}.\hskip 1em plus 0.5em minus 0.4em\relax Springer, 2011, pp.
  379--395.

\bibitem{Chen16}
M.~Chen, S.~Herbert, and C.~J. Tomlin, ``Exact and efficient
  {H}amilton-{J}acobi-based guaranteed safety analysis via system
  decomposition,'' \emph{International Conference on Robotics and Automation
  (ICRA)}, 2017.

\bibitem{Rahmani09}
A.~Rahmani, M.~Ji, M.~Mesbahi, and M.~Egerstedt, ``Controllability of
  multi-agent systems from a graph-theoretic perspective,'' \emph{SIAM Journal
  on Control and Optimization}, vol.~48, no.~1, pp. 162--186, 2009.

\bibitem{Rufino17}
A.~Rufino~Ferreira, C.~Meissen, M.~Arcak, and A.~Packard, ``Symmetry reduction
  for performance certification of interconnected systems,'' \emph{IEEE
  Transactions on Control of Networked Systems}, 2017.

\bibitem{Bloch96}
A.~M. Bloch, P.~S. Krishnaprasad, J.~E. Marsden, and R.~M. Murray,
  ``Nonholonomic mechanical systems with symmetry,'' \emph{Arch. Rational Mech.
  Anal.}, vol. 136, pp. 21--99, 1996.

\bibitem{Bullo99}
F.~Bullo and R.~M. Murray, ``Tracking for fully actuated mechanical systems: a
  geometric framework,'' \emph{Automatica}, vol.~35, no.~1, pp. 17--34, 1999.

\bibitem{Maidens17a}
J.~Maidens, A.~Barrau, S.~Bonnabel, and M.~Arcak, ``Symmetry reduction for
  dynamic programming and application to {MRI},'' in \emph{2017 American
  Control Conference (ACC)}, 2017, pp. 4625--4630.

\bibitem{Maidens17b}
------, ``Symmetry reduction for dynamic programming,'' \emph{under review},
  2017.

\bibitem{Grizzle84}
J.~Grizzle and S.~Marcus, ``Optimal control of systems possessing symmetries,''
  \emph{IEEE Transactions on Automatic Control}, vol.~29, no.~11, pp.
  1037--1040, 1984.

\bibitem{Ohsawa13}
T.~Ohsawa, ``Symmetry reduction of optimal control systems and principal
  connections,'' \emph{SIAM Journal on Control and Optimization}, vol.~51,
  no.~1, pp. 96--120, 2013.

\bibitem{Zinkevich01}
M.~Zinkevich and T.~Balch, ``Symmetry in {M}arkov decision processes and its
  implications for single agent and multi agent learning,'' in \emph{In
  Proceedings of the 18th International Conference on Machine Learning}.\hskip
  1em plus 0.5em minus 0.4em\relax Morgan Kaufmann, 2001, pp. 632--640.

\bibitem{Narayanamurthy07}
S.~M. Narayanamurthy and B.~Ravindran, ``Efficiently exploiting symmetries in
  real time dynamic programming,'' in \emph{Proceedings of the 20th
  International Joint Conference on Aritifical Intelligence}, ser.
  IJCAI'07.\hskip 1em plus 0.5em minus 0.4em\relax Morgan Kaufmann, 2007, pp.
  2556--2561.

\bibitem{Mitchell05}
I.~M. Mitchell, A.~M. Bayen, and C.~J. Tomlin, ``A time-dependent
  {H}amilton-{J}acobi formulation of reachable sets for continuous dynamic
  games,'' \emph{IEEE Transactions on Automatic Control}, vol.~50, no.~7, pp.
  947--957, 2005.

\bibitem{Bertsekas05}
D.~P. Bertsekas, \emph{Dynamic Programming and Optimal Control, Volume I},
  3rd~ed.\hskip 1em plus 0.5em minus 0.4em\relax Athena Scientific, 2005.

\bibitem{SaintPierre94}
P.~Saint-Pierre, ``Approximation of the viability kernel,'' \emph{Applied
  Mathematics \& Optimization}, vol.~29, no.~2, pp. 187--209, 1994.

\bibitem{Cardaliaguet94}
P.~Cardaliaguet, M.~Quincampoix, and P.~Saint-Pierre, ``Some algorithms for
  differential games with two players and one target,'' \emph{ESAIM:
  Mathematical Modelling and Numerical Analysis}, vol.~28, no.~4, pp. 441--461,
  1994.

\bibitem{Cartan37}
{\'E}.~Cartan, \emph{La th{\'e}orie des groupes finis et continus et la
  g{\'e}om{\'e}trie diff{\'e}rentielle: trait{\'e}es par la m{\'e}thode du
  rep{\`e}re mobile}, ser. Cahiers scientifiques.\hskip 1em plus 0.5em minus
  0.4em\relax Gauthier-Villars, 1937.

\bibitem{Bonnabel08}
S.~Bonnabel, P.~Martin, and P.~Rouchon, ``Symmetry-preserving observers,''
  \emph{IEEE Transactions on Automatic Control}, vol.~53, no.~11, pp.
  2514--2526, 2008.

\end{thebibliography}

\end{document}